\newcommand{\qed}{\hspace*{\fill}\rule{1ex}{1ex}}
\def\Label#1{\label{#1}\ [\ #1\ ]\ }
\def\Label{\label}
\title{Universally Attainable Error and Information Exponents, and Equivocation Rate
for the Broadcast Channels with Confidential Messages}
\author{\IEEEauthorblockN{Masahito Hayashi}\IEEEauthorblockA{%
Graduate School of Information Sciences,\\
Tohoku University, 980-8579 Japan\\
and
Centre for Quantum Technologies,\\
 National University of Singapore,\\
 3 Science Drive 2, Singapore 117542}
 \and \IEEEauthorblockN{Ryutaroh Matsumoto}\IEEEauthorblockA{%
Department of Communications and Integrated Systems,\\
Tokyo Instiutte of Technology, 152-8550 Japan}}
\date{\today}
\newtheorem{definition}{Definition}
\newtheorem{proposition}[definition]{Proposition}
\newtheorem{theorem}[definition]{Theorem}
\newtheorem{corollary}[definition]{Corollary}
\newtheorem{remark}[definition]{Remark}
\begin{document}
\maketitle
\thispagestyle{plain}
\begin{abstract}
We show universally attainable exponents 
for the decoding error and the mutual information 
and universally attainable equivocation rates for the conditional 
entropy for the
broadcast channels with confidential messages.
The error exponents are the same as ones given by K\"orner
and Sgarro for the broadcast channels with degraded message sets.
\end{abstract}
\begin{IEEEkeywords}
broadcast channel with confidential messages, information theoretic security, multiuser information theory
\end{IEEEkeywords}

\section{Introduction}\Label{sec1}
The information theoretic security attracts much attention
recently \cite{liang09},
because it offers security that does not depend on
a conjectured difficulty of some computational problem.
A classical problem in the information theoretic security
is the broadcast channel with confidential messages
(hereafter abbreviated as BCC)
first considered by Csisz\'ar and K\"orner \cite{csiszar78},
in which there is a single sender called Alice and
two receivers called Bob and Eve.
The problem in \cite{csiszar78} is a generalization of
the wiretap channel considered by Wyner \cite{wyner75}.
In the formulation in \cite{csiszar78},
Alice has a common messages destined for both Bob and Eve and
a private message destined solely for Bob.
The word ``confidential'' means that Alice wants to
prevent Eve from knowing much about the private message.
The coding in this situation has two goals, namely
error correction and secrecy.
The degree of secrecy is measured by the mutual information
between the private message to Bob and the received signal
by Eve.

On the other hand, the broadcast channel with degraded
message sets (hereafter abbreviated as BCD), considered by K\"orner and Marton \cite{korner77}, is a special case of BCC in which there is no requirement on the
confidentiality of the private message to Bob.
K\"orner and Sgarro \cite{korner80}
proposed the universal encoder and decoder
for BCD, which did not use conditional probability distribution of the channel
for encoding nor decoding, and clarified universally attainable
error exponents for BCD.
In studies of universal coding for channels,
we consider the compound channel, which is a
collection of (usually infinitely many) channels and try to clarify the exponents
realized by given encoder and decoder over that collection of channels.
The capacity of the compound wiretap channel was studied by
Liang et al.\ \cite{liang09wiretap},
in which the number of channels is finite and the receiver is
assumed to know the conditional probability distribution of the channel.
Kobayashi et al.\ \cite{kobayashi09} studied the BCC under the
same assumption as \cite{liang09wiretap}.
The assumption in \cite{kobayashi09,liang09wiretap} is more restrictive
than \cite{korner80} and they \cite{kobayashi09,liang09wiretap}
only proved that the mutual information divided by the code length
converges to zero, which means that their universally attainable
information exponents are zero.

In contrast to them,
Soma \cite{soma10}
clarified the universally attainable 
information exponents for the wiretap channels
under the same assumption as \cite{korner80}, though
he did not analyze the universally attainable error exponent
by his coding scheme.
Soma \cite{soma10} used  the channel resolvability
lemma in \cite{hayashi06b}, and it was totally unclear
how to extend his argument to the BCC with common messages.
Soma did not clarified the speed of convergence of mutual
information to the infinity when the information rate
of private message is large, neither.
We note that our universally attainable information exponent is
the same as Soma's \cite{soma10} when there is no common message.

In this paper, we attach the two-universal hash functions
\cite{carter79} to the encoder proposed by K\"orner and
Sgarro \cite{korner80}, then
we use the privacy amplification theorem \cite{bennett95privacy}
for the analysis of the mutual information
to obtain the universally attainable error and information exponents 
and universally attainable equivocation rates for the BCC.
Our argument is similar to the non-universal coding considered
in \cite{matsumotohayashi2011ieice} and the secure multiplex coding
considered in \cite{matsumotohayashi2011isit}.

This paper is organized as follows:
Section \ref{sec2} reviews relevant research results
used in this paper.
Section \ref{sec3} introduces the definition of
universally attainable exponents and provides ones satisfying the
definition.
Section \ref{sec4} concludes the paper.

\section{Preliminary}\Label{sec2}
\subsection{Broadcast channels with confidential messages}
Let Alice, Bob, and Eve be as defined in Section \ref{sec1}.
$\mathcal{X}$ denotes the channel input alphabet and
$\mathcal{Y}$ (resp.\ $\mathcal{Z}$) denotes
the channel output alphabet to Bob (resp.\ Eve).
We assume that $\mathcal{X}$, $\mathcal{Y}$, and
$\mathcal{Z}$ are finite unless otherwise stated.
%We shall discuss the continuous channel briefly in Remark
%\ref{rem:cont}.
We denote the conditional probability of the channel
to Bob (resp.\ Eve) by $P_{Y|X}$ (resp.\ $P_{Z|X}$).
The set $\mathcal{S}_n$ denotes that of the secret message
and $\mathcal{E}_n$ does that of the common message
when the block coding of length $n$ is used.
We shall define the achievability of a rate triple
$(R_{\mathrm{s}}$, $R_e$, $R_{\mathrm{c}})$.
For the notational convenience, we fix the base of logarithm,
including one used in entropy and mutual information, to the
base of natural logarithm.
The privacy amplification theorem introduced in
Theorem \ref{thm:pa}
is sensitive to
choice of the base of logarithm.

\begin{definition}
The rate triple $(R_{\mathrm{s}}$, $R_e$, $R_{\mathrm{c}})$ is said
to be \emph{achievable} if there exists a sequence of
Alice's stochastic encoder $f_n$ from 
$\mathcal{S}_n \times \mathcal{E}_n$ to $\mathcal{X}^n$,
Bob's deterministic decoder $\varphi_n: \mathcal{Y}^n
\rightarrow \mathcal{S}_n \times \mathcal{E}_n$ and
Eve's deterministic decoder $\psi_n: \mathcal{Z}^n
\rightarrow \mathcal{E}_n$ such that
\begin{eqnarray*}
\lim_{n\rightarrow \infty} \mathrm{Pr}[
(S_n, E_n) \neq \varphi_n(Y^n) \textrm{ or } E_n \neq \psi_n(Z^n)]
&=& 0,\\
\liminf_{n\rightarrow \infty} \frac{H(S_n|Z^n)}{n} &\geq& R_e,\\
\liminf_{n\rightarrow\infty} \frac{\log|\mathcal{S}_n|}{n} &\geq& R_{\mathrm{s}},\\
\liminf_{n\rightarrow\infty} \frac{\log|\mathcal{E}_n|}{n} &\geq& R_{\mathrm{c}},
\end{eqnarray*}
where $S_n$ and $E_n$ represents the secret and the common message,
respectively, have the uniform distribution on $\mathcal{S}_n$
and $\mathcal{E}_n$, respectively,
and $Y^n$ and $Z^n$ are the received signal by Bob and Eve,
respectively, with the transmitted signal $f_n(S_n,E_n)$
and the channel transition probabilities $P_{Y|X}$, $P_{Z|X}$.
The capacity region of the BCC is the closure of 
the achievable rate triples.
\end{definition}

\begin{theorem}\Label{th1}\cite{csiszar78}
The capacity region for the BCC is given by
the set of $R_{\mathrm{c}}$, $R_{\mathrm{s}}$ and $R_e$ such that
there exists a Markov chain $U\rightarrow V \rightarrow X \rightarrow 
YZ$ and
\begin{eqnarray}
R_{\mathrm{s}} + R_{\mathrm{c}} &\leq& I(V;Y|U)+\min[I(U;Y),I(U;Z)],\Label{eq:bcc1}\\
R_{\mathrm{c}} &\leq& \min[I(U;Y),I(U;Z)],\Label{eq:bcc2}\\
R_e & \leq & I(V;Y|U)-I(V;Z|U),\nonumber\\
R_e & \leq &R_{\mathrm{s}}.\nonumber
\end{eqnarray}
\end{theorem}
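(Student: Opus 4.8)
\textit{Proof sketch.} I would prove the two inclusions separately; the direct (achievability) part uses superposition coding with random binning and rate splitting, in the Wyner and Csisz\'ar--K\"orner style. Fix finite auxiliary alphabets and a joint law $P_U P_{V\mid U}P_{X\mid V}$ realizing the chain $U\to V\to X\to YZ$. Generate $e^{nR_{\mathrm c}}$ cloud centres $u^n$ i.i.d.\ from $P_U^{\otimes n}$; since $R_{\mathrm c}\le\min[I(U;Y),I(U;Z)]$, both Bob and Eve recover the common message by joint-typicality (packing-lemma) decoding. For each $u^n$ generate about $e^{nI(V;Y\mid U)}$ satellite codewords $v^n$ i.i.d.\ from $P_{V\mid U}^{\otimes n}(\cdot\mid u^n)$, partition them into bins of size $\approx e^{nI(V;Z\mid U)}$, let the bin index --- together with a sub-message carried openly on the $u^n$ layer when rate splitting is needed --- encode the secret message, pick $v^n$ uniformly inside the chosen bin, and finally draw $x^n$ from $P_{X\mid V}^{\otimes n}$. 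Bob jointly decodes $(u^n,v^n)$, hence $(E_n,S_n)$, with vanishing error precisely when the satellite rate stays below $I(V;Y\mid U)$ and the cloud rate below $I(U;Y)$, which is what the first two inequalities guarantee.

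The secrecy estimate is where privacy amplification enters. I would bound $H(S_n\mid Z^n)\ge H(S_n\mid Z^n,U^n)=H(S_n\mid U^n)-I(S_n;Z^n\mid U^n)$, which is legitimate because Eve decodes $U^n$ with vanishing error. With bin size $\approx e^{nI(V;Z\mid U)}$, the code inside a bin is a good resolvability code for Eve's channel $P_{Z\mid V}$ given $U=u^n$, so $Z^n$ is nearly independent of the bin index; a Fano-type argument --- given the bin, Eve can pin down the within-bin codeword from $Z^n$, there being only $\approx e^{nI(V;Z\mid U)}$ candidates --- then bounds $I(S_n;Z^n\mid U^n)$ by $o(n)$, yielding equivocation rate $\min[R_{\mathrm s},I(V;Y\mid U)-I(V;Z\mid U)]$. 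Since $R_e\le R_{\mathrm s}$ is assumed, the last two inequalities follow. (The body of this paper replaces this soft argument with an exponentially strong, universal one built from two-universal hash functions and Theorem \ref{thm:pa}.)

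For the converse, from a code sequence with vanishing error probability and $\liminf_{n\to\infty} H(S_n\mid Z^n)/n\ge R_e$, Fano's inequality gives $H(S_n,E_n\mid Y^n)=o(n)$ and $H(E_n\mid Z^n)=o(n)$. I would then set $U_i=(E_n,Y^{i-1},Z_{i+1}^n)$, $V_i=(S_n,U_i)$, introduce a uniform time-sharing index $Q$, and take $U=(U_Q,Q)$, $V=(V_Q,Q)$, $X=X_Q$, so that $U\to V\to X\to YZ$ holds. Expanding $\log|\mathcal E_n|$ and $\log|\mathcal S_n|+\log|\mathcal E_n|$ by the chain rule and invoking Fano delivers the first two inequalities, while $R_e\le I(V;Y\mid U)-I(V;Z\mid U)$ follows by bounding $H(S_n\mid Z^n)\le I(S_n;Y^n\mid E_n)-I(S_n;Z^n\mid E_n)+o(n)$ (again using Fano to absorb $H(E_n\mid Z^n)$ and $H(S_n\mid Y^n)$) and then rewriting $\sum_i[I(V_i;Y_i\mid U_i)-I(V_i;Z_i\mid U_i)]$ as $I(S_n;Y^n\mid E_n)-I(S_n;Z^n\mid E_n)$ by applying the Csisz\'ar sum identity twice --- with conditioning $E_n$ and with conditioning $(S_n,E_n)$ --- so that the cross terms cancel. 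I expect this last manipulation to be the main obstacle: arranging that the common-message conditioning and the secrecy penalty combine so that one auxiliary pair $(U,V)$ meets all four inequalities simultaneously is the technical heart of the Csisz\'ar--K\"orner argument, and it is sensitive to how the variables are grouped in the sum identity. Finally I would invoke the support-lemma/Carath\'eodory bounds on $|\mathcal U|$ and $|\mathcal V|$, so that the region is closed and the supremum in the capacity definition is attained.
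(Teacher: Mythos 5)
The paper states Theorem~\ref{th1} as a known result of Csisz\'ar and K\"orner (\cite{csiszar78}) and gives no proof of its own, so there is no paper argument to compare against; I evaluate your sketch on its own. The overall shape is right: superposition coding with random binning and a Wyner-style resolvability/Fano estimate for achievability, Fano plus the Csisz\'ar sum identity with the auxiliary identification $U_i=(E_n,Y^{i-1},Z^n_{i+1})$, $V_i=(S_n,U_i)$ and a time-sharing index for the converse, and the support lemma to bound $|\mathcal{U}|,|\mathcal{V}|$.

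There is, however, a genuine gap in the achievability step. Fixing the bin size at $\approx e^{nI(V;Z\mid U)}$ caps the $V$-layer bin-index rate at $I(V;Y\mid U)-I(V;Z\mid U)$, and the only other home you offer the secret message is the spare $U$-layer capacity $\min[I(U;Y),I(U;Z)]-R_{\mathrm c}$; this yields $R_{\mathrm s}+R_{\mathrm c}\le I(V;Y\mid U)-I(V;Z\mid U)+\min[I(U;Y),I(U;Z)]$, falling short of (\ref{eq:bcc1}) by $I(V;Z\mid U)$. It also makes your equivocation statement internally inconsistent: if $I(S_n;Z^n\mid U^n)=o(n)$, as you claim, then the equivocation rate is $R_{\mathrm s}$, not $\min[R_{\mathrm s},I(V;Y\mid U)-I(V;Z\mid U)]$. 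The missing idea is to let the bin size shrink with the secret rate: put $e^{nR'_{\mathrm s}}$ bins of size $e^{n[I(V;Y\mid U)-R'_{\mathrm s}]}$ in each cloud, with $R'_{\mathrm s}$ (ranging up to $I(V;Y\mid U)$) the $V$-layer part of $R_{\mathrm s}$. Once $R'_{\mathrm s}>I(V;Y\mid U)-I(V;Z\mid U)$ the dummy rate drops below $I(V;Z\mid U)$, Eve resolves the within-bin codeword, $I(S_n;Z^n\mid U^n)\approx n\bigl[R'_{\mathrm s}-(I(V;Y\mid U)-I(V;Z\mid U))\bigr]$ grows linearly in $n$, and the equivocation saturates at $I(V;Y\mid U)-I(V;Z\mid U)$ --- the regime where the third constraint actually binds and which your sketch never enters. (The paper's own universal scheme deliberately omits this refined splitting, which is exactly why its Remark records the extra constraint $R_{\mathrm s}\le I(V;Y\mid U)$ in place of (\ref{eq:bcc1}).)
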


As described in \cite{liang09},
$U$ can be regarded as the common message,
$V$ the combination of the common and the private messages,
and $X$ the transmitted signal.

%
%If we set $R_e = R_{\mathrm{s}}$ then we have
%$\lim_{n\rightarrow\infty} I(S_n;Z^n)/n =0$,
%which is traditionally called \emph{perfect security},
%because Eve knows little about $S_n$.
%However, Maurer \cite{maurer94} and Csisz\'ar \cite{csiszar96}
%observed that 
%$\lim_{n\rightarrow\infty} I(S_n;Z^n) =0$ is a better
%criterion for the secrecy of $S_n$ from Eve,
%and this stronger requirement is called the strong security criterion,
%while the traditional one is called the weak security criterion
%recently.
%
\begin{corollary}\Label{cor:bcc}\cite{csiszar78}
The notation is same as Theorem \ref{th1}.
If we require $R_e = R_{\mathrm{s}}$, the capacity region for
$(R_{\mathrm{c}}$, $R_{\mathrm{s}})$ is given by
the set of $R_{\mathrm{c}}$ and $R_{\mathrm{s}}$ such that
there exists a Markov chain $U\rightarrow V \rightarrow X \rightarrow 
YZ$ and
\begin{eqnarray*}
R_{\mathrm{c}} &\leq& \min[I(U;Y),I(U;Z)],\\
R_{\mathrm{s}} & \leq & I(V;Y|U)-I(V;Z|U).
\end{eqnarray*}
\end{corollary}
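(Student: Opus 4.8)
The plan is to read off Corollary \ref{cor:bcc} directly from Theorem \ref{th1} by imposing $R_e=R_{\mathrm{s}}$ and eliminating the variable $R_e$ from the single-letter region. No new coding argument is needed; everything is already contained in the capacity-region characterization of Theorem \ref{th1}.

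First I would fix a Markov chain $U\to V\to X\to YZ$ and determine which pairs $(R_{\mathrm{c}},R_{\mathrm{s}})$ make the triple $(R_{\mathrm{s}},R_e,R_{\mathrm{c}})$ with $R_e=R_{\mathrm{s}}$ admissible for that chain in Theorem \ref{th1}. Substituting $R_e=R_{\mathrm{s}}$, the four inequalities become: \eqref{eq:bcc2}, i.e.\ $R_{\mathrm{c}}\le\min[I(U;Y),I(U;Z)]$; the bound $R_{\mathrm{s}}\le I(V;Y|U)-I(V;Z|U)$; the sum bound \eqref{eq:bcc1}, i.e.\ $R_{\mathrm{s}}+R_{\mathrm{c}}\le I(V;Y|U)+\min[I(U;Y),I(U;Z)]$; and $R_{\mathrm{s}}\le R_{\mathrm{s}}$. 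The last one is vacuous and is discarded.

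Next I would show that the sum bound \eqref{eq:bcc1} is redundant given the other two. Adding the two surviving nontrivial inequalities gives $R_{\mathrm{s}}+R_{\mathrm{c}}\le \bigl(I(V;Y|U)-I(V;Z|U)\bigr)+\min[I(U;Y),I(U;Z)]$, and since $I(V;Z|U)\ge 0$ the right-hand side does not exceed $I(V;Y|U)+\min[I(U;Y),I(U;Z)]$, so \eqref{eq:bcc1} holds automatically. Hence, for each fixed Markov chain, the admissible set of pairs is exactly $\{(R_{\mathrm{c}},R_{\mathrm{s}}):R_{\mathrm{c}}\le\min[I(U;Y),I(U;Z)],\ R_{\mathrm{s}}\le I(V;Y|U)-I(V;Z|U)\}$. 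Taking the union over all Markov chains $U\to V\to X\to YZ$ produces the region claimed in the corollary.

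The only point requiring a remark is the closure appearing in the definition of the capacity region: the region in Theorem \ref{th1} is already closed, its intersection with the hyperplane $\{R_e=R_{\mathrm{s}}\}$ is closed, and, the rate vectors being bounded, its projection onto the $(R_{\mathrm{c}},R_{\mathrm{s}})$-plane is again closed; the standard cardinality bounds on $U$ and $V$ guarantee that the union over Markov chains is itself closed, matching the form stated in the corollary. I do not anticipate any genuine obstacle here — the only thing one must be careful about is to verify that \eqref{eq:bcc1} drops out rather than contributing an extra constraint, which is precisely the computation carried out above.
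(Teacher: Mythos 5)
Your derivation is correct: substituting $R_e=R_{\mathrm{s}}$ into Theorem~\ref{th1}, noting that the sum constraint (\ref{eq:bcc1}) is implied by (\ref{eq:bcc2}) together with $R_{\mathrm{s}}\le I(V;Y|U)-I(V;Z|U)$ because $I(V;Z|U)\ge 0$, and projecting onto the $(R_{\mathrm{c}},R_{\mathrm{s}})$-plane is precisely how the corollary follows. The paper states this as a cited consequence of \cite{csiszar78} without spelling out the elimination, and your argument is the intended route.
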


\subsection{Broadcast channels with degraded message sets}\Label{sec:bcd}
If we set $R_e= 0$ in the BCC,
the secrecy requirement is removed from BCC, and
the coding problem is equivalent to
the broadcast channel with degraded message sets (abbreviated as
BCD) considered by K\"orner and Marton \cite{korner77}.
\begin{corollary}\Label{cor:bcd}
The capacity region of the BCD is given by
the set of $R_{\mathrm{c}}$ and $R_{\mathrm{p}}$ such that
there exists a Markov chain $U\rightarrow V = X \rightarrow
YZ$ and
\begin{eqnarray}
R_{\mathrm{c}} &\leq& \min[I(U;Y),I(U;Z)],\nonumber\\
R_{\mathrm{c}}+R_{\mathrm{p}} & \leq & I(V;Y|U)+\min[I(U;Y),I(U;Z)]\Label{eq:bcd1},
\end{eqnarray}
where $R_{\mathrm{p}}$ denotes the rate of the private message.
\end{corollary}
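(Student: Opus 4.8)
The plan is to derive Corollary~\ref{cor:bcd} from Theorem~\ref{th1} together with the observation, already noted above, that the BCD is operationally the BCC with the confidentiality requirement removed, i.e.\ with $R_e=0$. Setting $R_e=0$ in the achievability criterion makes the constraint $\liminf_n H(S_n|Z^n)/n\ge R_e$ vacuous and reduces $R_e\le R_{\mathrm s}$ to $0\le R_{\mathrm s}$, which always holds; the former secret message then simply becomes an ordinary private message of rate $R_{\mathrm p}=R_{\mathrm s}$. So $(R_{\mathrm c},R_{\mathrm p})$ is achievable for the BCD exactly when the triple $(R_{\mathrm s},R_e,R_{\mathrm c})=(R_{\mathrm p},0,R_{\mathrm c})$ is achievable for the BCC, and by Theorem~\ref{th1} this happens precisely when there is a Markov chain $U\to V\to X\to YZ$ with $R_{\mathrm c}\le\min[I(U;Y),I(U;Z)]$, $R_{\mathrm c}+R_{\mathrm p}\le I(V;Y|U)+\min[I(U;Y),I(U;Z)]$, and $I(V;Y|U)\ge I(V;Z|U)$. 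It then remains to show that this region equals the one in the statement, which forces $V=X$ and discards the last constraint.

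The inclusion of the former region into the claimed one is immediate: given an admissible chain $U\to V\to X\to YZ$, retain $U$ and replace $V$ by $X$. Then $\min[I(U;Y),I(U;Z)]$ is unchanged, while the data-processing inequality for $V\to X\to Y$ conditioned on $U$ gives $I(V;Y|U)\le I(X;Y|U)$, so the sum-rate bound only loosens; hence $(R_{\mathrm c},R_{\mathrm p})$ satisfies \eqref{eq:bcd1} and $R_{\mathrm c}\le\min[I(U;Y),I(U;Z)]$, and the constraint $I(V;Y|U)\ge I(V;Z|U)$ is not even needed here. For the reverse inclusion, take a chain $U\to X\to YZ$ realizing $(R_{\mathrm c},R_{\mathrm p})$ as in the corollary and split into two cases. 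If $I(X;Y|U)\ge I(X;Z|U)$, then $V=X$ already satisfies $I(V;Y|U)\ge I(V;Z|U)$, so $(R_{\mathrm p},0,R_{\mathrm c})$ is admissible in Theorem~\ref{th1}. If $I(X;Y|U)<I(X;Z|U)$, I collapse all auxiliaries by taking $U=V=X$, so that $I(V;Y|U)\ge I(V;Z|U)$ becomes $0\ge0$; then the chain rules $I(X;Y)=I(U;Y)+I(X;Y|U)$ and $I(X;Z)=I(U;Z)+I(X;Z|U)$ (which also give $I(U;Y)\le I(X;Y)$ and $I(U;Z)\le I(X;Z)$), combined with $I(X;Y|U)<I(X;Z|U)$, yield
\[
I(X;Y|U)+\min[I(U;Y),I(U;Z)]\le\min[I(X;Y),I(X;Z)]\quad\text{and}\quad\min[I(U;Y),I(U;Z)]\le\min[I(X;Y),I(X;Z)],
\]
so $(R_{\mathrm c},R_{\mathrm p})$ still satisfies \eqref{eq:bcc1} and \eqref{eq:bcc2} for the collapsed chain and $(R_{\mathrm p},0,R_{\mathrm c})$ is admissible. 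Closedness and convexity of the resulting region are inherited from Theorem~\ref{th1} (the auxiliary alphabets being bounded via a Carath\'eodory argument), which finishes the identification of the two regions.

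The step I expect to be the main obstacle is the case $I(X;Y|U)<I(X;Z|U)$ in the reverse inclusion: the obvious choice $V=X$ violates the constraint $I(V;Y|U)\ge I(V;Z|U)$ that the BCC characterization carries over, and one must check that absorbing the private layer into the cloud centre $U$ costs nothing, which is exactly the displayed pair of inequalities. Everything else is routine bookkeeping with the chain rule and the data-processing inequality.
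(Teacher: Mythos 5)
Your proposal is correct. The paper actually supplies no proof for this corollary at all: it simply remarks that the BCD is the BCC with $R_e=0$, attributes the region to K\"orner and Marton, and moves on — so there is no ``paper's own proof'' to compare against, only an implicit claim that the region in Theorem~\ref{th1} specializes correctly. Your derivation is the natural way to justify the label ``corollary,'' and it correctly isolates the one genuinely non-trivial point: after setting $R_e=0$, Theorem~\ref{th1} still carries the residual constraint $I(V;Y|U)\ge I(V;Z|U)$ and allows a general $V$, so one must show both that $V$ may be forced to equal $X$ and that this residual constraint can be discarded. Your forward inclusion (replace $V$ by $X$, invoke conditional data processing $I(V;Y|U)\le I(X;Y|U)$) and your two-case reverse inclusion (keep $V=X$ when $I(X;Y|U)\ge I(X;Z|U)$; collapse $U=V=X$ otherwise and use $I(X;Y)=I(U;Y)+I(X;Y|U)$, $I(X;Z)=I(U;Z)+I(X;Z|U)$ together with $I(X;Y|U)<I(X;Z|U)$ to check the sum-rate and common-rate bounds survive) are both correct as written; I checked the two displayed inequalities in the collapsed case and they hold in both sub-cases of which mutual information attains the minimum. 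The concluding remark about closedness and convexity via Carath\'eodory is the standard bookkeeping and fine to leave at that level of detail.
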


\subsection{Two-universal hash functions}
We shall use a family of two-universal hash functions \cite{carter79}
for the privacy amplification theorem introduced later.

\begin{definition}
Let $\mathcal{F}$ be a set of functions from $\mathcal{S}_1$ to $\mathcal{S}_2$,
and $F$ the not necessarily uniform
random variable on $\mathcal{F}$. If for any $x_1 \neq x_2
\in \mathcal{S}_1$ we have
\[
\mathrm{Pr}[F(x_1)=F(x_2)] \leq \frac{1}{|\mathcal{S}_2|},
\]
then $\mathcal{F}$ is said to be a \emph{family of two-universal hash functions}.
\end{definition}
%We shall use a family of two-universal hash functions in which each function is surjective.
%The set of all surjective linear functions is an example of such a family.

\subsection{Strengthened privacy amplification theorem}
In order to analyze the equivocation rate, we need to strengthen the privacy amplification theorem
originally appeared in \cite{bennett95privacy,hayashi11}.
\begin{theorem}\Label{thm:pa} \cite{matsumotohayashi2011isit,matsumotohayashi2011netcod}
Let $L$ be a random variable with a finite alphabet
$\mathcal{L}$ and $Z$ any random variable. %\footnote{We do not
%assume the existence of its probability mass function nor
%probability density function.}.
Let $\mathcal{F}$ be a family of two-universal
hash functions from $\mathcal{L}$ to $\mathcal{M}$,
and $F$ be a random variable on $\mathcal{F}$
statistically independent of $L$.
Then
\begin{equation}
\mathbf{E}_f \exp(\rho I(F(L);Z|F=f))  \leq   
1+ |\mathcal{M}|^\rho\mathbf{E}[P_{L|Z}(L|Z)^\rho]\Label{hpa1}
\end{equation}
for $0<\rho\leq 1$.
%If $Z$ is not discrete RV,
%$I(F(L);Z|F)$ is defined to be
%$H(F(L)|F) - \mathbf{E}_z H(F(L)|F,Z=z)$.

In addition to the above assumptions,
when $L$ is uniformly distributed, we have
\begin{equation}
|\mathcal{M}|^\rho\mathbf{E}[P_{L|Z}(L|Z)^\rho]=
\frac{|\mathcal{M}|^\rho\mathbf{E}[P_{L|Z}(L|Z)^\rho P_L(L)^{-\rho}]}{|\mathcal{L}|^\rho}.\Label{hpa1uni}
\end{equation}
In addition to all of the above assumptions,
when $Z$ is a discrete random variable, 
we have
\begin{equation}
\frac{|\mathcal{M}|^\rho\mathbf{E}[P_{L|Z}(L|Z)^\rho P_L(L)^{-\rho}]}{|\mathcal{L}|^\rho}=
\frac{|\mathcal{M}|^\rho}{|\mathcal{L}|^\rho}
\sum_{z,\ell} P_L(\ell) P_{Z|L}(z|\ell)^{1+\rho} P_Z(z)^{-\rho}.
\Label{hpa1discrete}
\end{equation}
\end{theorem}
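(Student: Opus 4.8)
The plan is to establish the main estimate~(\ref{hpa1}) and then read off (\ref{hpa1uni}) and (\ref{hpa1discrete}) by elementary substitutions. Fix $\rho\in(0,1]$ and write $M:=F(L)$; since $F$ is independent of $L$, conditioning on $\{F=f\}$ leaves the joint law of $(L,Z)$ unchanged and makes $M=f(L)$ a deterministic image of $L$, so $P_{M|Z,F=f}(m|z)=\sum_{\ell:\,f(\ell)=m}P_{L|Z}(\ell|z)$. First I would prove a single-shot bound valid for each fixed $f$. Bounding $I(M;Z|F=f)$ from above by its value with the uniform distribution $P_{\mathrm{mix}}$ on $\mathcal{M}$ in place of the output marginal gives $I(M;Z|F=f)\le D(P_{MZ|F=f}\,\|\,P_{\mathrm{mix}}\otimes P_Z)=\log|\mathcal{M}|-H(M|Z,F=f)$. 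Since $-\rho\,H(M|Z,F=f)=\mathbf{E}[\log P_{M|Z,F=f}(M|Z)^{\rho}\mid F=f]$, the concavity of $\log$ gives $\exp(-\rho\,H(M|Z,F=f))\le\mathbf{E}[P_{M|Z,F=f}(M|Z)^{\rho}\mid F=f]$, and combining the two,
\[
\exp(\rho\,I(M;Z|F=f))\ \le\ |\mathcal{M}|^{\rho}\,\mathbf{E}_Z\!\biggl[\sum_m P_{M|Z,F=f}(m|Z)^{1+\rho}\biggr].
\]

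Next I would take $\mathbf{E}_F$ of both sides. Expanding $P_{M|Z,F=f}$ as above yields the identity $\sum_m P_{M|Z,F=f}(m|z)^{1+\rho}=\sum_{\ell}P_{L|Z}(\ell|z)\bigl(\sum_{\ell':\,f(\ell')=f(\ell)}P_{L|Z}(\ell'|z)\bigr)^{\rho}$. Writing the inner bracket as $P_{L|Z}(\ell|z)+\sum_{\ell'\ne\ell}P_{L|Z}(\ell'|z)\,\mathbf{1}[F(\ell')=F(\ell)]$ and using the concavity of $t\mapsto t^{\rho}$ to move $\mathbf{E}_F$ inside the $\rho$-th power,
\[
\mathbf{E}_F\!\biggl[\Bigl(\sum_{\ell':\,F(\ell')=F(\ell)}P_{L|Z}(\ell'|z)\Bigr)^{\rho}\biggr]\ \le\ \Bigl(P_{L|Z}(\ell|z)+\sum_{\ell'\ne\ell}P_{L|Z}(\ell'|z)\,\mathrm{Pr}[F(\ell')=F(\ell)]\Bigr)^{\rho}.
\]
The two-universal property $\mathrm{Pr}[F(\ell')=F(\ell)]\le 1/|\mathcal{M}|$ for $\ell'\ne\ell$, together with $\sum_{\ell'\ne\ell}P_{L|Z}(\ell'|z)\le1$ and the subadditivity $(a+b)^{\rho}\le a^{\rho}+b^{\rho}$, bounds the right-hand side by $P_{L|Z}(\ell|z)^{\rho}+|\mathcal{M}|^{-\rho}$. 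Multiplying by the weight $P_{L|Z}(\ell|z)$ and summing over $\ell$ gives $\mathbf{E}_F\bigl[\sum_m P_{M|Z,F}(m|z)^{1+\rho}\bigr]\le\sum_{\ell}P_{L|Z}(\ell|z)^{1+\rho}+|\mathcal{M}|^{-\rho}$; multiplying by $|\mathcal{M}|^{\rho}$, taking $\mathbf{E}_Z$, and using $\mathbf{E}_Z\bigl[\sum_{\ell}P_{L|Z}(\ell|Z)^{1+\rho}\bigr]=\mathbf{E}[P_{L|Z}(L|Z)^{\rho}]$ produces exactly $|\mathcal{M}|^{\rho}\mathbf{E}[P_{L|Z}(L|Z)^{\rho}]+1$, i.e.~(\ref{hpa1}).

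For the two refinements: if $L$ is uniform then $P_L(L)^{-\rho}=|\mathcal{L}|^{\rho}$ almost surely, so $\mathbf{E}[P_{L|Z}(L|Z)^{\rho}]=|\mathcal{L}|^{-\rho}\,\mathbf{E}[P_{L|Z}(L|Z)^{\rho}P_L(L)^{-\rho}]$, and multiplying by $|\mathcal{M}|^{\rho}$ yields~(\ref{hpa1uni}). If in addition $Z$ is discrete, I would substitute $P_{L|Z}(\ell|z)=P_L(\ell)P_{Z|L}(z|\ell)/P_Z(z)$ and $P_{LZ}(\ell,z)=P_L(\ell)P_{Z|L}(z|\ell)$ into $\mathbf{E}[P_{L|Z}(L|Z)^{\rho}P_L(L)^{-\rho}]=\sum_{z,\ell}P_{LZ}(\ell,z)P_{L|Z}(\ell|z)^{\rho}P_L(\ell)^{-\rho}$; the powers of $P_L(\ell)$ then collapse to a single factor $P_L(\ell)$, leaving $\sum_{z,\ell}P_L(\ell)P_{Z|L}(z|\ell)^{1+\rho}P_Z(z)^{-\rho}$, so reinstating the prefactor $|\mathcal{M}|^{\rho}/|\mathcal{L}|^{\rho}$ gives~(\ref{hpa1discrete}).

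The one genuinely delicate point is the averaging over $F$: one must apply Jensen's inequality to $\mathbf{E}_F$ of the $\rho$-th power of the \emph{entire} collision mass $\sum_{\ell':\,F(\ell')=F(\ell)}P_{L|Z}(\ell'|z)$. The coarser route --- expanding $(\sum_i a_i)^{1+\rho}\le\sum_i a_i^{1+\rho}+\sum_{i\ne j}a_i a_j^{\rho}$ and then estimating the cross term by two-universality --- would only yield an additive term of order $(|\mathcal{L}|/|\mathcal{M}|)^{1-\rho}$ in place of the clean constant $1$, and that term blows up precisely in the regime $|\mathcal{L}|\gg|\mathcal{M}|$ relevant to privacy amplification. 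The single-shot bound of the first step and the two specializations are routine.
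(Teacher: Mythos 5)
The paper states this theorem without an in-text proof, citing \cite{matsumotohayashi2011isit,matsumotohayashi2011netcod}. Your argument is correct and matches the standard derivation used in that line of work: the single-shot bound $\exp(\rho I(M;Z|F=f))\le|\mathcal{M}|^{\rho}\mathbf{E}_Z\bigl[\sum_m P_{M|Z,F=f}(m|Z)^{1+\rho}\bigr]$ via the uniform output marginal and Jensen on $\log$, followed by Jensen on $t\mapsto t^{\rho}$ applied to the full collision mass \emph{before} invoking two-universality and subadditivity --- which, as you correctly observe, is precisely the ordering that yields the clean additive constant $1$ rather than an $(|\mathcal{L}|/|\mathcal{M}|)^{1-\rho}$ term.
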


As in \cite{hayashi11} we introduce the following two functions.
\begin{definition}
\begin{eqnarray}
\psi(\rho, P_{Z|L}, P_L) &=& 
\log \sum_z \sum_\ell P_L(\ell) P_{Z|L}(z|\ell)^{1+\rho} P_Z(z)^{-\rho},
\Label{eq:psid}\\
\phi(\rho,P_{Z|L},P_L) 
&=& \log \sum_z\left(
\sum_{\ell} P_{L}(\ell) (P_{Z|L}(z|\ell)^{1/(1-\rho)})\right)^{1-\rho}.
\Label{phid}
\end{eqnarray}
\end{definition}
Observe that $\phi$ is essentially Gallager's function $E_0$
\cite{gallager68}.

At the end of our evaluation of the mutual information to Eve,
we shall use the averaged version of $\phi$, which is introduced
below.
\begin{definition}
\begin{eqnarray}
&&\phi(\rho,P_{Z|L},P_{L|U}, P_U) \nonumber\\
&=&  \log \sum_u P_U(u)\sum_z \left(
\sum_{\ell} P_{L|U}(\ell|u) (P_{Z|L}(z|\ell)^{1/(1-\rho)})\right)^{1-\rho}.
\Label{phidu}
\end{eqnarray}
\end{definition}

\begin{proposition}
For fixed $0<\rho \le  1$, $P_L$, $\tilde{P}_L$, $P_{Z|L}$, and $\tilde{P}_{Z|L}$ we have
\begin{align}
\exp(\psi(\rho, P_{Z|L}, P_L))
& \leq \exp(\phi(\rho, P_{Z|L}, P_L)).\Label{eq:philarger} \\
\exp(\phi(\rho, P_{Z|L}, P_L))
& \le 
C_1 \exp(\phi(\rho, P_{Z|L}, \tilde{P}_L)) \Label{Haya-2} 
\end{align}
when $
P_L \le C_1 \tilde{P}_L$.
\end{proposition}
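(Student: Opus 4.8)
The plan is to establish \eqref{eq:philarger} and \eqref{Haya-2} separately, each by a summand-wise comparison over the output letter $z$, writing $P_Z(z)=\sum_\ell P_L(\ell)P_{Z|L}(z|\ell)$ throughout.

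First I would treat \eqref{eq:philarger}. Since $\exp(\psi(\rho,P_{Z|L},P_L))=\sum_z P_Z(z)^{-\rho}\sum_\ell P_L(\ell)P_{Z|L}(z|\ell)^{1+\rho}$ and $\exp(\phi(\rho,P_{Z|L},P_L))=\sum_z(\sum_\ell P_L(\ell)P_{Z|L}(z|\ell)^{1/(1-\rho)})^{1-\rho}$, it suffices to prove, for each fixed $z$ with $P_Z(z)>0$ (both summands vanish otherwise, under the convention $0\cdot 0^{-\rho}=0$), that
\[
\sum_\ell P_L(\ell)P_{Z|L}(z|\ell)^{1+\rho}\le\Bigl(\sum_\ell P_L(\ell)P_{Z|L}(z|\ell)\Bigr)^{\rho}\Bigl(\sum_\ell P_L(\ell)P_{Z|L}(z|\ell)^{1/(1-\rho)}\Bigr)^{1-\rho}.
\]
With $a_\ell:=P_{Z|L}(z|\ell)$ and $P_L$ regarded as a (not necessarily normalized) weight, this is exactly H\"older's inequality applied to the factorization $a_\ell^{1+\rho}=a_\ell^{\rho}\cdot a_\ell^{1}$ with conjugate exponents $p=1/\rho$ and $q=1/(1-\rho)$, which satisfy $1/p+1/q=1$ and $p,q\ge 1$ for $0<\rho<1$. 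Summing the resulting per-$z$ bound over $z$ gives \eqref{eq:philarger}. The endpoint $\rho=1$ follows either by continuity of both sides on $(0,1]$, or directly from $\sum_\ell P_L(\ell)a_\ell^2\le(\max_{\ell:\,P_L(\ell)>0}a_\ell)\sum_\ell P_L(\ell)a_\ell=P_Z(z)\max_{\ell:\,P_L(\ell)>0}a_\ell$, the $q=\infty$ version of the same estimate. As already remarked, $\phi$ is essentially Gallager's $E_0$, and \eqref{eq:philarger} is the classical comparison of $E_0$ with the sphere-packing--type functional $\psi$.

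For \eqref{Haya-2} the argument is elementary. The pointwise hypothesis $P_L(\ell)\le C_1\tilde P_L(\ell)$ gives, for each $z$, $\sum_\ell P_L(\ell)P_{Z|L}(z|\ell)^{1/(1-\rho)}\le C_1\sum_\ell\tilde P_L(\ell)P_{Z|L}(z|\ell)^{1/(1-\rho)}$; applying the increasing, positively homogeneous map $x\mapsto x^{1-\rho}$ (so $(C_1x)^{1-\rho}=C_1^{1-\rho}x^{1-\rho}$) and summing over $z$ yields $\exp(\phi(\rho,P_{Z|L},P_L))\le C_1^{1-\rho}\exp(\phi(\rho,P_{Z|L},\tilde P_L))$. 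Finally $C_1^{1-\rho}\le C_1$ because $C_1\ge 1$, which is forced by summing the hypothesis $P_L(\ell)\le C_1\tilde P_L(\ell)$ over $\ell$; this gives \eqref{Haya-2}.

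I do not expect a genuine obstacle. The only steps that could cause momentary trouble will be pinning down the correct conjugate exponents $(1/\rho,1/(1-\rho))$ in the H\"older step — so that the weights $P_L(\ell)$ stay attached to the measure rather than being folded into the functions — and disposing of the degenerate cases $P_Z(z)=0$ and $\rho=1$ via the stated conventions and a short continuity remark.
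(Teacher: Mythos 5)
Your proof is correct, and it takes a genuinely different route from the paper on both inequalities. For (\ref{eq:philarger}) the paper gives no argument at all and simply cites \cite{hayashi11}; your per-$z$ H\"older inequality with conjugate exponents $1/\rho$ and $1/(1-\rho)$, applied to the factorization $a_\ell^{1+\rho}=a_\ell^{\rho}\cdot a_\ell$ under the weight $P_L$, is a clean self-contained substitute, and the $q=\infty$ check correctly covers the endpoint $\rho=1$. For (\ref{Haya-2}) the paper proceeds through the concavity of $P_L\mapsto\exp(\phi(\rho,P_{Z|L},P_L))$ (attributed to Gallager) together with the general lemma that any nonnegative concave function $f$ of distributions satisfies $f(P)\le\alpha f(Q)$ whenever $P\le\alpha Q$, proved via the decomposition $Q=(1/\alpha)P+(1-1/\alpha)R$. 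You bypass concavity entirely: bounding the inner sum pointwise in $z$ and then invoking monotonicity and positive homogeneity of $x\mapsto x^{1-\rho}$ gives the result directly, and in fact yields the slightly sharper constant $C_1^{1-\rho}\le C_1$. Your route is more elementary and extracts more from the specific form of $\phi$; the paper's concavity lemma is more general, applying verbatim to any concave replacement of $\exp(\phi)$, which would matter if the functional changed but is not needed here. One caveat shared by both proofs: the concavity cited by the paper is stated for $0<\rho<1$, and your H\"older step likewise needs $\rho<1$; you handle $\rho=1$ explicitly by continuity and the $\ell^\infty$ limit, which the paper's text leaves implicit.
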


\begin{proof}
The first inequality (\ref{eq:philarger}) was shown in \cite{hayashi11}.

Any positive concave function $f$ of probability
distributions satisfies 
\begin{equation}
f(P) \leq \alpha f(Q), \Label{eq:concave}
\end{equation}
when $P \leq \alpha \times Q$ with a positive real number $\alpha \ge 1$.
This is 
because by the assumption there exists
another distribution $R$ such that $(1/\alpha) P + (\alpha-1)/\alpha \cdot R
= Q$, and
\begin{eqnarray*}
f(P)/\alpha & \leq & f(P)/\alpha + (\alpha-1)/\alpha \cdot f(R)\\
&\leq & f((1/\alpha) P + (\alpha-1)/\alpha \cdot R) = f(Q).
\end{eqnarray*}
Since $\exp(\phi(\rho, P_{Z|L}, P_L))$ is concave with respect to $P_L$ with fixed $0<\rho< 1$ and $P_{Z|L}$ \cite{gallager68}, 
the second inequality (\ref{Haya-2}) holds.
\end{proof}

\section{Universal coding for the broadcast channels with
confidential messages}\Label{sec3}
\subsection{Universally attainable exponents and universally attainable equivocation rates}
We introduce the universally attainable exponents for the BCC
by adjusting the original definition for the BCD given by
K\"orner and Sgarro \cite{korner80}.
\begin{definition}\Label{def:univexp}
Let $\mathcal{W}(\mathcal{X}$, $\mathcal{Y}$, $\mathcal{Z})$
be the set of all discrete memoryless broadcast channels
$W: \mathcal{X} \rightarrow \mathcal{Y}, \mathcal{Z}$, and $\mathbf{R}^+$ the set of positive
real numbers.
A quadruple of functions $(\tilde{E}^{\mathrm{p}}$, $\tilde{E}^{\mathrm{c}}$,
$\tilde{E}_+^{\mathrm{I}}$, 
$\tilde{E}_-^{\mathrm{I}})$ 
from $\mathbf{R}^+ \times \mathbf{R}^+ 
\times \mathcal{W}(\mathcal{X}$, $\mathcal{Y}$, $\mathcal{Z})$
to $[\mathbf{R}^+ \cup \{0\}]^4$ is said to be
a universally attainable quadruple of exponents and equivocation rate
for the
family $\mathcal{W}(\mathcal{X}$, $\mathcal{Y}$, $\mathcal{Z})$
if, for every $R_{\mathrm{s}} > 0$, $R_{\mathrm{c}}>0$, $\delta>0$,
and for sufficiently large $n$,
there exists a sequence of codes $(f_n$, $\varphi_n$, $\psi_n)$
of length $n$ of rate pair at least
$(R_{\mathrm{s}}$, $R_{\mathrm{c}})$ such that,
denoting by $e_n^{\mathrm{s}}(W)$, $e_n^{\mathrm{c}}(W)$,
$e_n^{\mathrm{I}}(W)$ the maximum error probabilities by Bob and by Eve
and the mutual information between the secret message and Eve's received
signal, for the $n$-th memoryless extension of the channel $W
\in \mathcal{W}(\mathcal{X}$, $\mathcal{Y}$, $\mathcal{Z})$
we have
\begin{align}
e_n^{\mathrm{s}}(W) \leq & \exp(-n
[ \tilde{E}^{\mathrm{s}}(R_{\mathrm{s}}, R_{\mathrm{c}})-\delta]),\Label{Haya-51}\\
e_n^{\mathrm{c}}(W) \leq & \exp(-n
[ \tilde{E}^{\mathrm{c}}(R_{\mathrm{s}},  R_{\mathrm{c}})-\delta]),\Label{Haya-52}\\
e_n^{\mathrm{I}}(W) \leq & 
\max \{
\exp(-n
[ \tilde{E}_+^{\mathrm{I}}(R_{\mathrm{s}}, R_{\mathrm{c}})-\delta]),
n [\tilde{E}_-^{\mathrm{I}}(R_{\mathrm{s}}, R_{\mathrm{c}})+\delta ]
\}
,\Label{eq:logexp}
\end{align}
where $R_{\mathrm{s}}$ and $R_{\mathrm{c}}$ denote
the rate of the secret message and the common message, respectively.
\end{definition}

Suppose that we are given a broadcast $W:\mathcal{X}\rightarrow \mathcal{Y},\mathcal{Z}$ and positive real numbers
$R_{\mathrm{s}}$ and $R_{\mathrm{c}}$.
We fix a distribution $Q_{UV}$ on $\mathcal{U}\times \mathcal{V}$,
a channel $\Xi: \mathcal{V}\rightarrow \mathcal{X}$, and the rate
$R_{\mathrm{p}}$ of the private message in the BCD encoder
that satisfy Eqs.\ (\ref{eq:bcc1}), (\ref{eq:bcc2}) and (\ref{eq:bcd1}),
where the RVs $U$ ,$V$, $X$, $Y$ and $Z$ in Eqs.\ (\ref{eq:bcc1}), (\ref{eq:bcc2}) and (\ref{eq:bcd1})
are distributed according to $Q_{UV}$,
$\Xi$ and $W$.
We present a universally attainable quadruple of exponents and equivocation rate
in terms of $R_{\mathrm{p}}$, $Q_{UV}$ and $\Xi$ as
\begin{align}
F^{\mathrm{s}} =& F^{\mathrm{s}}(W, R_{\mathrm{p}}, R_{\mathrm{c}}, Q_{UV},\Xi ) = 
F^{\mathcal{Y},\mathrm{KS}}(W\circ \Xi, R_{\mathrm{p}}, R_{\mathrm{c}}, Q_{UV}),
\Label{eq:univs}\\
F^{\mathrm{c}} =& F^{\mathrm{c}}(W, R_{\mathrm{p}}, R_{\mathrm{c}}, Q_{UV},\Xi ) = 
F^{\mathcal{Z},\mathrm{KS}}(W\circ \Xi, R_{\mathrm{p}}, R_{\mathrm{c}}, Q_{UV}),
\Label{eq:univc}\\
F_+^{\mathcal{I}} 
=& F_+^{\mathcal{I}}(W, R_{\mathrm{p}}, R_{\mathrm{s}}, Q_{UV},\Xi ) 
\nonumber \\
=&
\sup_{0 < \rho \le 1} \Biggl[\rho (R_{\mathrm{p}}-R_{\mathrm{s}}) -
\phi (\rho,W_{\mathcal{Z}}\circ \Xi ,Q_{V|U}, Q_{U} )
\Biggr] 
\Label{eq:univi1}\\
F_-^{\mathcal{I}} 
=& F_-^{\mathcal{I}}(W, R_{\mathrm{p}}, R_{\mathrm{s}}, Q_{UV},\Xi ) 
= 
I(V;Z|U) -R_{\mathrm{p}}+R_{\mathrm{s}} 
\Label{eq:univi2},
\end{align}
where
$F^{\mathcal{Y},\mathrm{KS}}$ and $F^{\mathcal{Z},\mathrm{KS}}$ are the error exponent functions appeared in \cite[Theorem 2]{korner80}.

\begin{theorem}[Extension of {\cite[Theorem 1, part (a)]{korner80}}]
Eqs.\ (\ref{eq:univs})--(\ref{eq:univi2}) are 
a universally attainable quadruple of exponents and equivocation rate
in the sense of Definition \ref{def:univexp}.
\end{theorem}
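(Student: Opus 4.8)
The plan is to run the K\"orner--Sgarro (KS) universal BCD code on the alphabet $\mathcal V$, push each $\mathcal V$-letter through $\Xi$ to obtain the channel input, and wrap the whole thing in a two-universal hash. Fix the KS random constant-composition BCD ensemble of block length $n$ with common rate $R_{\mathrm{c}}$, private rate $R_{\mathrm{p}}$, and cloud/satellite composition $Q_{UV}$; write $\mathcal L$ and $\mathcal E_n$ for its private- and common-message sets (so $\log|\mathcal L|\approx nR_{\mathrm{p}}$, $\log|\mathcal E_n|\approx nR_{\mathrm{c}}$). Set $\mathcal S_n=\mathcal M$ with $\log|\mathcal M|=\lceil nR_{\mathrm{s}}\rceil$ (we may assume $R_{\mathrm{s}}\le R_{\mathrm{p}}$), and let $F$ be a uniformly random linear surjection $\mathcal L\to\mathcal M$, which is a family of two-universal hash functions all of whose fibres are equinumerous. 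Alice's stochastic encoder, on input $(s,e)$, draws $L$ uniformly from $F^{-1}(s)$ and transmits the BCD codeword of private message $L$ and common message $e$; Bob runs the KS decoder to obtain $(\hat L,\hat E_n)$ and outputs $(F(\hat L),\hat E_n)$; Eve runs the KS common-message decoder on $Z^n$ to obtain $\hat E_n$. Since $S_n$ is uniform and $F$ has equal-size fibres, $L$ is uniform on $\mathcal L$ -- the hypothesis required by Theorem~\ref{thm:pa} -- and $S_n=F(L)$.

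For the error exponents, note that for every realization of $F$ one has $\{(S_n,E_n)\neq(F(\hat L),\hat E_n)\}\subseteq\{(L,E_n)\neq(\hat L,\hat E_n)\}$, so Bob's maximal error probability is at most that of the underlying KS code, which over the memoryless extension of $W$ is at most $\exp(-n[F^{\mathcal{Y},\mathrm{KS}}(W\circ\Xi,R_{\mathrm{p}},R_{\mathrm{c}},Q_{UV})-\delta])$ for large $n$ by \cite[Theorem~2]{korner80}, uniformly in $W$; this is (\ref{Haya-51}) with $F^{\mathrm{s}}$ of (\ref{eq:univs}). Likewise Eve's common-message decoder inherits the KS guarantee applied to $W_{\mathcal Z}\circ\Xi$, giving (\ref{Haya-52}) with $F^{\mathrm{c}}$ of (\ref{eq:univc}). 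Both bounds are deterministic properties of the drawn codebook and do not involve $F$.

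For the leakage, since $S_n$ and $E_n$ are independent we have $e_n^{\mathrm{I}}(W)=I(S_n;Z^n)\le I(S_n;Z^n\mid E_n)$, and by convexity of $t\mapsto e^{\rho t}$ it suffices to bound $\mathbf{E}\exp(\rho I(S_n;Z^n\mid E_n=e))$ for each $e$ and each $0<\rho\le1$. Fix $e$ and the codebook: $L$ is uniform on $\mathcal L$, $S_n=F(L)$, and $Z^n$ is the output of the memoryless channel $W_{\mathcal Z}\circ\Xi$ fed by the $V$-codeword $V^n(L,e)$. By (\ref{hpa1}), (\ref{hpa1uni}), (\ref{hpa1discrete}), and then (\ref{eq:philarger}),
\[
\mathbf{E}_F\exp\!\bigl(\rho I(F(L);Z^n\mid F,E_n=e)\bigr)\ \le\ 1+\tfrac{|\mathcal M|^\rho}{|\mathcal L|^\rho}\,\exp\!\bigl(\phi(\rho,P_{Z^n\mid L},P_L)\bigr).
\]
Averaging over the codebook, the concavity of $t\mapsto t^{1-\rho}$ together with the conditionally i.i.d.\ (exchangeable) structure of the $V$-codewords and a type-class normalization factor of size $e^{o(n)}$ give $\mathbf{E}_{\mathrm{cb}}\exp(\phi(\rho,P_{Z^n\mid L},P_L))\le e^{o(n)}\exp(n\,\phi(\rho,W_{\mathcal Z}\circ\Xi,Q_{V\mid U},Q_U))$ with $\phi(\rho,\cdot,\cdot,\cdot)$ as in (\ref{phidu}); and $|\mathcal M|^\rho/|\mathcal L|^\rho=e^{-n\rho(R_{\mathrm{p}}-R_{\mathrm{s}})+o(n)}$. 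Hence
\[
\mathbf{E}_{\mathrm{cb},F}\exp\!\bigl(\rho I(S_n;Z^n\mid E_n)\bigr)\ \le\ 1+\exp\!\Bigl(-n\bigl[\rho(R_{\mathrm{p}}-R_{\mathrm{s}})-\phi(\rho,W_{\mathcal Z}\circ\Xi,Q_{V\mid U},Q_U)\bigr]+o(n)\Bigr).
\]
When the bracket is positive for some $\rho$, $\log(1+x)\le x$ and optimization over $\rho$ yield $I(S_n;Z^n\mid E_n)\le\exp(-n[F_+^{\mathcal{I}}-\delta])$ with $F_+^{\mathcal{I}}$ of (\ref{eq:univi1}); letting $\rho\to0^+$ and using $\phi(\rho,\cdot)/\rho\to I(V;Z\mid U)$ together with $\log(1+x)\le\log 2+(\log x)^+$ yields $I(S_n;Z^n\mid E_n)\le n[I(V;Z\mid U)-R_{\mathrm{p}}+R_{\mathrm{s}}+\delta]=n[F_-^{\mathcal{I}}+\delta]$ with $F_-^{\mathcal{I}}$ of (\ref{eq:univi2}). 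Choosing the codebook good for a suitable finite set of values of $\rho$, this gives (\ref{eq:logexp}) in expectation over (codebook, $F$).

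Finally, over the joint randomness of the codebook and $F$, the sum of Bob's error probability, Eve's common-message error probability, and the leakage penalty $\exp(\rho I(S_n;Z^n\mid E_n))-1$ has small expectation, so Markov's inequality produces one pair (codebook, $F$) satisfying (\ref{Haya-51})--(\ref{eq:logexp}) simultaneously; the KS decoder being channel-oblivious, the error bounds hold for all $W\in\mathcal W(\mathcal X,\mathcal Y,\mathcal Z)$, and since the leakage bound is a fixed property of the chosen codebook's composition statistics evaluated at $W_{\mathcal Z}\circ\Xi$, a routine compactness/finite-net argument over the compact set $\mathcal W$ extends it to all $W$ as well. The main obstacle is exactly this simultaneous-goodness step combined with the single-letterization of $\mathbf{E}_{\mathrm{cb}}\exp(\phi(\rho,P_{Z^n\mid L},P_L))$ for the KS constant-composition ensemble: one must verify that one and the same codebook can be made good for Bob's decoding, Eve's common-message decoding, and secrecy at once, and that the Gallager-type quantity for that codebook collapses to $\phi(\rho,W_{\mathcal Z}\circ\Xi,Q_{V\mid U},Q_U)$ with only subexponential loss. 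Everything else is assembling (\ref{hpa1})--(\ref{phidu}), (\ref{eq:philarger}), and \cite[Theorem~2]{korner80}.
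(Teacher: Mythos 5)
Your construction and the use of the strengthened privacy amplification theorem together with Eq.~(\ref{eq:philarger}), followed by a Markov selection argument, are the same as the paper's. But there is a real gap precisely at the step you identify as ``the main obstacle'' and then dismiss as a ``routine compactness/finite-net argument.'' That step is not routine, and the paper handles it with an essential and quite different mechanism.

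The problem is universality of the leakage bound. Markov's inequality gives you one pair (codebook, $F$) that is good for a \emph{fixed} channel $W$, but you need a single pair that is simultaneously good for the whole (uncountable) family $\mathcal{W}(\mathcal{X},\mathcal{Y},\mathcal{Z})$. For the decoding errors this is free because the K\"orner--Sgarro code is already channel-oblivious; for the leakage it is not. Your proposed fix is a finite $\varepsilon$-net over $\mathcal{W}$, but to make that work you would need the map $W\mapsto I_{W_{\mathcal Z}^n\circ\Xi^n}(F(L);Z^n,E_n|F)$, for a \emph{fixed} codebook, to be Lipschitz with at most polynomially growing constants --- and since this quantity involves the $n$-fold product channel, the natural modulus of continuity grows exponentially, forcing an exponentially fine net and destroying the polynomial slack $p(n)$ your union bound can afford. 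Your remark that the leakage ``is a fixed property of the chosen codebook's composition statistics'' is also not correct as stated: for a realized codebook the leakage depends on the actual codewords, not only on their joint type.

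The paper instead exploits the constant-composition structure: since $V^n$ lives in a single type class $T_n(Q_V)$, the restricted memoryless channel $\overline W_{\mathcal Z}^n\big|_{T_n(Q_V)}$ is a \emph{convex combination of conditional types} ($V$-shells) $\overline W_n\in\mathcal W_n(Q_V)$, of which there are only polynomially many. Joint convexity of conditional mutual information gives Eq.~(\ref{Haya-21}), which bounds the leakage under \emph{any} memoryless $W_{\mathcal Z}$ by the same convex combination of leakages under the finitely many conditional types $\overline W_n$; Eq.~(\ref{Haya-22}) then lets one absorb the combination weights. One then runs Markov over the polynomial set $\mathcal W_n(Q_V)$ (with $p(n)>2\cdot\frac{16}{13}|\mathcal W_n(Q_V)|$), obtaining one codebook good for all conditional types at once --- and hence, via the convex decomposition, for all $W$. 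This conditional-type reduction is the crux of the universality claim and cannot be replaced by a net argument without further work. A smaller point: you propose choosing the codebook good for ``a suitable finite set of values of $\rho$,'' whereas the paper avoids this by Markov-bounding the mutual information itself (which is $\rho$-free) rather than $\exp(\rho I)$, so that the resulting bound holds for every $\rho\in[1/n,1]$ with a single selection.
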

\noindent\emph{Proof.}
We shall attach the inverse of two-universal hash functions to
the constant composition code used by K\"orner and Sgarro.
We do not evaluate the decoding error probability,
because that of our code is not larger than \cite{korner80}.
Observe that our exponents in
Eqs.\ (\ref{eq:univs}) and (\ref{eq:univc}) are the same
as \cite{korner80} with the channel $W \circ \Xi$.
We shall evaluate the mutual information.

We assume for a while that $Q_{UV}$ is a type of a sequence of length
$n$ over $\mathcal{U} \times \mathcal{V}$.
Recall that their codebook \cite[Appendix]{korner80} in the random coding
is chosen according to the uniform distribution on the sequences
with joint type $Q_{UV}$.
Let $n$ be the code length,
$\mathcal{B}_n$ the set of private messages for the BCD encoder,
and $\mathcal{E}_n$ the set of common messages.
For $b \in \mathcal{B}_n$
and $e \in \mathcal{E}_n$,
$\lambda(b,e) \in \mathcal{V}^n$ denotes the codeword of $(b,e)$
encoded by the BCD encoder $\lambda$.
$\Lambda$ denotes the random selection of $\lambda$ in the
random coding argument.

Let $\mathcal{S}_n$ be the set of secret messages in the BCC.
Let $\mathcal{F}_n$ be a family of two-universal hash functions
from $\mathcal{B}_n$ to $\mathcal{S}_n$.
For every $f \in \mathcal{F}_n$, we assume that $f$ is surjective
and that $f^{-1}(s)$ has the constant number of elements
for every $s \in \mathcal{S}_n$.
Those assumptions are met, for example, by choosing
the set of all surjective linear maps from $\mathcal{B}_n$
to $\mathcal{S}_n$ as $\mathcal{F}_n$.

The structure of the transmitter and the receiver is as follows:
Fix a hash function $f_n \in \mathcal{F}_n$ and
Alice and Bob agree on the choice of $f_n$.
Given a secret message $s_n$,
choose $b_n$ uniformly randomly from $\{ b \in \mathcal{B}_n \mid
f_n(b) = s_n \}$, treat $b_n$ as the private message to Bob,
encode $b_n$ along with the common message $e_n$ by
a BCD encoder, and get a codeword $v^n$.
Apply the artificial noise to $v^n$ according to the
conditional probability distribution $\Xi$ and
get the transmitted signal $x^n$.
Bob decodes the received signal and get $b_n$, then
apply $f_n$ to $b_n$ to get $s_n$.
This construction requires Alice and Bob to agree on the
choice of $f_n$.

Let $S_n$ denote the RV of the secret message.
Define $B_n$ to be the RV uniformly chosen from the
random set $\{ b \in \mathcal{B}_n \mid
F_n(b) = S_n \}$.
In the following discussion, 
since we treat the channel $W_{\mathcal{Z}}^n\circ \Xi^n:{\cal V}^n\to {\cal Z}^n$,
we simplify it to 
$\overline{W}_{\mathcal{Z}}^n$.
In this case, the mutual information between $F_n(B_n)$ and $Z^n$
depends on the channel from ${\cal V}^n$ to ${\cal Z}^n$.
In particular, for the later analysis, we need to treat the mutual information between $F_n(B_n)$ and $Z^n$
when the channel from ${\cal V}^n$ to ${\cal Z}^n$ is not necessarily memoryless.
So, the mutual information between $F_n(B_n)$ and $Z^n$ will be written 
as a function $I_{\overline{W}_{n}}(F_n(B_n);Z^n|F_n)$ of a discrete channel $\overline{W}_{n}$ from ${\cal V}^n$ to ${\cal Z}^n$.
Note that $\overline{W}_{n}$ is arbitrary and is not necessarily memoryless.

We want to apply the privacy amplification theorem
in order to evaluate 
$I_{\overline{W}_{n}}(F_n(B_n);Z^n|F_n)$.
To use the theorem we must ensure
independence of $F_n$ and $B_n$.
The independence is satisfied by the assumptions on
$\mathcal{F}_n$ if $S_n$ is uniformly distributed.
In that case $B_n$ is uniformly distributed over
$\mathcal{B}_n$.
The remaining task is to find an upper bound
on $I_{\overline{W}_n}(F_n(B_n);Z^n|F_n,\Lambda)$.
%Since the decoding error probability of the above scheme
%is not greater than that of the code for BCD,
%we do not have to analyze the decoding error probability.

Firstly, we consider $\mathbf{E}_{f_n} \exp(\rho I_{W_n}(F_n(B_n);Z^n|F_n=f_n$, $\Lambda=\lambda))$ with
fixed selection $\lambda$ of $\Lambda$.
In the following analysis,
we do not make any assumption on the probability
distribution of $E_n$ except that $S_{n}$, $E_n$, $F_n$ and $\Lambda$
are statistically independent.

Recall that  $\Lambda$ is the RV indicating selection of
codebook in the random ensemble constructed from the joint type $Q_{UV}$
in the way
considered in \cite[Appendix]{korner80}.
Let $U^n = \Lambda(E_n)$ on $\mathcal{U}^n$
and $V^n=\Lambda(B_n,E_n)$ on $\mathcal{V}^n$ codewords
for the BCD taking the random selection $\Lambda$ taking
into account, and $Z^n$ Eve's received signal.
Since we are using the constant composition code as used in
\cite{korner80}, $U^n$ and $V^n$ are not i.i.d.\ RVs.
So, the distribution $P_{V^n,U^n}$ satisfies 
\begin{equation}
P_{V^n|U^n=u}(v) \leq (n+1)^{|\mathcal{U}\times \mathcal{V}|} Q^n_{V|U}(v|u) \Label{eq101}
\end{equation}
for a fixed $u \in \mathcal{U}^n$ 
by \cite[Lemma 2.5, Chapter 1]{csiszarbook}, and
\begin{equation}
P_{U^n}(u) \leq (n+1)^{|\mathcal{U}|} Q^n_U(u), \Label{eq102}
\end{equation}
by \cite[Lemma 2.3, Chapter 1]{csiszarbook}.
Hence, 
(\ref{Haya-2}) yields that
\begin{align}
& \exp (\phi (\rho, \overline{W}_{\mathcal{Z}}^n,P_{V^n|U^n},P_{U^n})) \nonumber \\
\le &
(n+1)^{|\mathcal{U}|^2|\mathcal{V}|}
 \exp (\phi (\rho, \overline{W}_{\mathcal{Z}}^n,Q_{V|U}^n,Q_{U}^n)) 
\label{Haya-30}.
\end{align}

In the code $\Lambda$,
the random variable $V^n$ takes values in 
the subset $T_n(Q_V)$, 
which is defined as the set of elements of ${\cal V}^n$ whose type is $Q_V$.
Hence, it is sufficient to treat the channel whose input system is the subset $T_n(Q_V)$ of ${\cal V}^n$.
Then, we have the following convex combination:
\begin{align}
\overline{W}_{\mathcal{Z}}^n|_{T_n(Q_V)}
=\sum_{\overline{W}_{n}\in {\cal W}_n(Q_V)}
\lambda (\overline{W}_{n}) \overline{W}_{n},
\end{align}
where
$\lambda (\overline{W}_{n})$ is a positive constant and 
${\cal W}_n(Q_V)$ is the family of 
conditional types from ${\cal V}^n$ to ${\cal Z}^n$,
which is the $V$-shell of a sequence of type $Q_V$.
The joint convexity of the conditional relative entropy yields that
\begin{align}
I_{\overline{W}_{\mathcal{Z}}^n}(F_n(B_n);Z^n|F_n)
\le
\sum_{\overline{W}_{n}\in {\cal W}_n(Q_V)}
\lambda (\overline{W}_{n}) 
I_{\overline{W}_{n}}(F_n(B_n);Z^n|F_n).\Label{Haya-21}
\end{align}
We can also show that for any element $\overline{W}_{n}\in {\cal W}_n(Q_V)$ 
we have
%\begin{align}
%e^{\phi(\rho,\overline{W}_{\mathcal{Z}}^n,P_{V^n|U^n},P_{U^n} )}
%\ge
%\lambda (\overline{W}_{n}) 
%e^{\phi(\rho,\overline{W}_{n},P_{V^n|U^n},P_{U^n} )}.
%\end{align}
\begin{align}
&e^{\phi(\rho,\overline{W}_{\mathcal{Z}}^n,P_{V^n|U^n},P_{U^n} )}
\nonumber\\
=&
\sum_{u}P_{U^n}(u)\sum_{z}
(\sum_{v}P_{V^n|U^n}(v|u)
(\sum_{\overline{W}_{n}'\in {\cal W}_n(Q_V)}
\lambda (\overline{W}_{n}') 
\overline{W}_{n}'(z|v))^{\frac{1}{1-\rho}})^{1-\rho} \nonumber\\
\ge &
\sum_{u}P_{U^n}(u)\sum_{z}
(\sum_{v}P_{V^n|U^n}(v|u)
(
%\sum_{\overline{W}_{n}'\in {\cal W}_n(Q_V)}
\lambda (\overline{W}_{n}) 
\overline{W}_{n}(z|v))^{\frac{1}{1-\rho}})^{1-\rho} \nonumber\\
= &
%\sum_{\overline{W}_{n}\in {\cal W}_n(Q_V)}
\lambda (\overline{W}_{n}) 
e^{\phi(\rho,\overline{W}_{n},P_{V^n|U^n},P_{U^n} )}.
\Label{Haya-22}
\end{align}

Hence, in order to evaluate 
$I_{\overline{W}_{\mathcal{Z}}^n}(F_n(B_n);Z^n,E_n|F_n)$,
we evaluate $I_{\overline{W}_{n}}(F_n(B_n);Z^n,E_n|F_n)$:

%By the almost same argument as \cite{matsumotohayashi2011ieice}
%with use of Eq.\ (\ref{hpa1}),
%we can see
{\allowdisplaybreaks
\begin{align}
%&\mathbf{E}_{f_n}\exp(\rho I_{\overline{W}_{n}}(\alpha_{\mathcal{I}}(F_n(B_n));Z^n|F_n=f_n,\Lambda=\lambda))\nonumber\\
&\mathbf{E}_{f_n} \exp(\rho I_{\overline{W}_{n}}((F_n(B_n);Z^n,E_n|F_n=f_n,\Lambda=\lambda))\nonumber\\
%&\textrm{(Giving the common message $E_n$ does not increase $I_{W_n}$ much.)}\nonumber\\
=& \mathbf{E}_{f_n} \exp(\rho  \sum_{e}P_{E_n}(e)
I_{\overline{W}_{n}}(F_n(B_n);Z^n|F_n=f_n,E_n=e,\Lambda=\lambda))\nonumber\\
\leq &
\mathbf{E}_{f_n} \sum_{e}P_{E_n}(e) \exp(\rho  
I_{\overline{W}_{n}}(F_n(B_n);Z^n|F_n=f_n,E_n=e,\Lambda=\lambda))\nonumber\\
\leq &1+ \sum_{e}P_{E_n}(e)
e^{n\rho (R_{\mathrm{s}}-R_{\mathrm{p}})}
\sum_{b,z}P_{B_n}(b)P_{Z^n|B_n,E_n,\Lambda=\lambda}(z|b,e)^{1+\rho}\nonumber\\
&P_{Z^n|E_n=e,\Lambda=\lambda}(z)^{-\rho}\textrm{ (by Eqs.\ (\ref{hpa1}--\ref{hpa1discrete}))}\nonumber\\
=&1+\sum_{e}P_{E_n}(e)
e^{n\rho (R_{\mathrm{s}}-R_{\mathrm{p}})}
\sum_{v,z}\underbrace{\sum_{b:\lambda(b,e)=v}P_{B_n}(b)}_{=P_{V^n|E_n=e,\Lambda=\lambda}(v)}\nonumber\\
&\underbrace{P_{Z^n|B_n,E_n,\Lambda=\lambda}(z|b,e)^{1+\rho}}_{=P_{Z^n|V^n,\Lambda=\lambda}(z|v)^{1+\rho}}P_{Z^n|E_n=e,\Lambda=\lambda}(z)^{-\rho}\nonumber\\
=&1+\sum_{e}P_{E_n}(e)
e^{n\rho (R_{\mathrm{s}}-R_{\mathrm{p}})}
\sum_{v,z}P_{V^n|E_n=e,\Lambda=\lambda}(v)\nonumber\\
&P_{Z^n|V^n,\Lambda=\lambda}(z|v)^{1+\rho}
P_{Z^n|E_n=e,\Lambda=\lambda}(z)^{-\rho}\nonumber\\
=&1+\sum_{e}P_{E_n}(e)
\exp (n\rho (R_{\mathrm{s}} -R_{\mathrm{p}}) + \psi(\rho,P_{Z^n|V^n,\Lambda=\lambda},P_{V^n|E_n=e,\Lambda=\lambda} )
\nonumber\\
=&1+\sum_{e}P_{E_n}(e)
\exp (n \rho (R_{\mathrm{s}} -R_{\mathrm{p}}) + \psi(\rho,P_{Z^n|V^n},P_{V^n|E_n=e,\Lambda=\lambda} )
\nonumber\\
=&1+\sum_{e}P_{E_n}(e)\exp(n\rho(R_{\mathrm{s}}-R_{\mathrm{p}}) 
+ \psi(\rho,\overline{W}_{n}, P_{V^n|E_n=e,\Lambda=\lambda}))\nonumber\\
%&\textrm{ (by \cite{matsumotohayashi2011ieice} and Eq.\ (\ref{eq:psid}))},\nonumber\\
\le &1+\sum_{e}P_{E_n}(e)\exp(n\rho(R_{\mathrm{s}}-R_{\mathrm{p}}) 
+ \phi(\rho,\overline{W}_{n},P_{V^n|E_n=e,\Lambda=\lambda}))\nonumber\\
&\textrm{ (by Eq.\ (\ref{eq:philarger}))}.\nonumber
\end{align}
}

We shall average the above upper bound over $\Lambda$.
By the almost same argument as \cite{matsumotohayashi2011ieice},
we can see
\begin{align}
&\exp(\rho \mathbf{E}_{f_n,\lambda} 
I_{\overline{W}_{n}}(F_n(B_n);Z^n,E_n|F_n=f_n,\Lambda=\lambda)) %\Label{eq:finalprepre}
\nonumber \\
=&\exp(\rho \mathbf{E}_{f_n,\lambda} \sum_{e}P_{E_n}(e)
I_{\overline{W}_{n}}(F_n(B_n);Z^n|F_n=f_n,\Lambda=\lambda,E_n=e)) %\Label{eq:finalprepre}
\nonumber \\
\leq &\mathbf{E}_{f_n,\lambda}\exp(\rho \sum_{e}P_{E_n}(e)
I_{\overline{W}_{n}}(F_n(B_n);Z^n|F_n=f_n,\Lambda=\lambda,E_n=e))\nonumber\\
\leq &1+ \mathbf{E}_{\lambda}\sum_{e}P_{E_n}(e)\exp(n\rho(R_{\mathrm{s}}-R_{\mathrm{p}}) 
+ \phi(\rho,\overline{W}_{n}, P_{V^n|E_n=e,\Lambda=\lambda}))\nonumber\\
\leq &1+ \exp(n\rho(R_{\mathrm{s}}-R_{\mathrm{p}})) \sum_{u\in\mathcal{U}^n} P_{U^n}(u)
\exp(\phi(\rho,\overline{W}_{n},P_{V^n|U^n=u})) \nonumber\\
= &1+ 
\varepsilon_{n,\rho}(\overline{W}_{n}, P_{V^n,U^n})
\Label{eq100},
\end{align}
where
$\varepsilon_{n,\rho}(\overline{W}_{n}, P_{V^n,U^n}):=
\exp(n\rho(R_{\mathrm{s}}-R_{\mathrm{p}})
+\phi(\rho,\overline{W}_{n},P_{V^n|U^n},P_{U^n})))$.
Taking the logarithm, we have
\begin{align}
& \mathbf{E}_{f_n,\lambda} 
I_{\overline{W}_{n}}(F_n(B_n);Z^n,E_n|F_n=f_n,\Lambda=\lambda)
\\
\leq &
\frac{1}{\rho} \log (1+ \varepsilon_{n,\rho}(\overline{W}_{n}, P_{V^n,U^n}) )
\Label{eq100-1},
\end{align}

Observe that what we have shown is that
the averages over $f_n$ and $\lambda$ of 
$\exp(
\rho I_{\overline{W}_{n}}(F_n(B_n);Z^n,E_n|F_n=f_n,\Lambda=\lambda))$
and
$I_{\overline{W}_{n}}(F_n(B_n);Z^n,E_n|F_n=f_n,\Lambda=\lambda))$
are smaller than 
Eqs.\ (\ref{eq100}) and \ (\ref{eq100-1}).

Let $p(n)$ be a polynomial function of $n$.
We can see that with probability of $1-1/p(n)$
the pair $(f_n$, $\lambda)$ makes
$\exp(\rho I_{\overline{W}_{n}}(F_n(B_n);Z^n,E_n|F_n=f_n,\Lambda=\lambda))$
and
$I_{\overline{W}_{n}}(F_n(B_n);Z^n,E_n|F_n=f_n,\Lambda=\lambda)$
smaller than  $p(n)$ times 
Eqs.\ (\ref{eq100}) and (\ref{eq100-1}), respectively.
Since 
the inequalities (\ref{Haya-51}) and (\ref{Haya-52}) hold at least with probability $1-\frac{3}{16}=\frac{13}{16}$
with random selection of $\Lambda$ \cite[Eq. (24)]{korner80},
one can take $p(n) > 2 \frac{16}{13}|{\cal W}_n(Q_V)|$ \cite{csiszarbook}, and
by doing so we can see that there exists at least one pair of
$f_n$ and $\lambda$ such that
all elements $\overline{W}_{n} \in {\cal W}_n(Q_V)$ satisfies
the inequalities (\ref{Haya-51}) and (\ref{Haya-52}) and
\begin{align}
& I_{\overline{W}_{n}}(F_n(B_n);Z^n,E_n |F_n=f_n,\Lambda=\lambda)
\nonumber \\
\le & 
\frac{p(n)}{\rho}  \log (1+ \varepsilon_{n,\rho}(\overline{W}_{n}, P_{V^n,U^n}))
\le 
\frac{p(n)}{\rho} \varepsilon_{n,\rho}(\overline{W}_{n}, P_{V^n,U^n})
 \Label{Haya-13} \\
& \exp(\rho I_{\overline{W}_{n}}(F_n(B_n);Z^n,E_n|F_n=f_n,\Lambda=\lambda))
\nonumber \\
\le & p(n) (1+ \varepsilon_{n,\rho}(\overline{W}_{n}, P_{V^n,U^n}) ). \Label{Haya-4}
\end{align}
Thus, 
(\ref{Haya-21}), (\ref{Haya-22}), (\ref{Haya-30}) and (\ref{Haya-13}) yield that
\begin{align}
&
I_{\overline{W}_{\mathcal{Z}}^n}(F_n(B_n);Z^n,E_n|F_n=f_n,\Lambda=\lambda) \nonumber\\
%=& \sum_{e}P_{E_n}(e)I_{\overline{W}_{\mathcal{Z}}^n}(F_n(B_n);Z^n|F_n=f_n,\Lambda=\lambda,E_n=e ) 
%\Label{Haya-23-1}
%\nonumber \\
%\le &
%\sum_{\overline{W}_{n}\in {\cal W}_n(Q_V)}
%\lambda (\overline{W}_{n}) 
%\frac{p(n)}{\rho}  
%\varepsilon_{n,\rho}(\overline{W}_{n}, P_{V^n,U^n})
%\nonumber \\
\le &
\sum_{\overline{W}_{n}\in {\cal W}_n(Q_V)}
\lambda (\overline{W}_{n}) 
\frac{p(n)}{\rho} 
\varepsilon_{n,\rho}(\overline{W}_{n}, P_{V^n,U^n})
\nonumber \\
\le &
\sum_{\overline{W}_{n}\in {\cal W}_n(Q_V)}
\frac{p(n)(n+1)^{|\mathcal{U}|^2|\mathcal{V}|}}{\rho}  \varepsilon_{n,\rho}(\overline{W}_{\mathcal{Z}}^n,Q_{V,U}^n) 
\nonumber \\
\le &
\frac{p(n)|{\cal W}_n(Q_V)|(n+1)^{|\mathcal{U}|^2|\mathcal{V}|}}{\rho}  \varepsilon_{n,\rho}(\overline{W}_{\mathcal{Z}}^n,Q_{V,U}^n) 
\nonumber \\
=&
\frac{\overline{p}(n)}{\rho}  
\varepsilon_{1,\rho}(\overline{W}_{\mathcal{Z}},Q_{V,U})^n 
%=&
%\frac{\overline{p}(n)^2}{\rho}  
%\exp (n (\rho 
%(R_{\mathrm{p}}-R_{\mathrm{s}})+\phi (\rho,\overline{W}_{\mathcal{Z}},Q_{V|U},Q_U)))
\Label{Haya-23},
\end{align}
where
$\overline{p}(n):=p(n)(n+1)^{|\mathcal{U}|^2|\mathcal{V}|}|{\cal W}_n(Q_V)|$.

Since $
\log \varepsilon_{1,\rho}(\overline{W}_{\mathcal{Z}},Q_{V,U})
=
R_{\mathrm{s}}-R_{\mathrm{p}}+\phi (\rho,\overline{W}_{\mathcal{Z}},Q_{V|U},Q_U)$,
for an arbitrary $\delta>0$,
we can choose a large integer $n_1$ such that
\begin{align}
& \inf_{1/n \le \rho \le 1}
\log \bigl(\frac{\overline{p}(n)}{\rho}  
\varepsilon_{1,\rho}(\overline{W}_{\mathcal{Z}},Q_{V,U})^n \bigr) \nonumber \\
\le &
\inf_{1/n \le \rho \le 1}
\log 
\bigl( \varepsilon_{1,\rho}(\overline{W}_{\mathcal{Z}},Q_{V,U})^n \bigr) 
+\log \overline{p}(n)+\log n
\nonumber \\
\le &
-n ( F_+^{\mathcal{I}}(W, R_{\mathrm{p}}, R_{\mathrm{s}}, Q_{UV},\Xi ) -\delta)
\nonumber 
\end{align}
for $n \ge n_1$.
Since
(\ref{Haya-23}) holds with any $\rho \in [1/n,1]$, 
we obtain
\begin{align}
& \log 
I_{\overline{W}_{\mathcal{Z}}^n}(F_n(B_n);Z^n,E_n|F_n=f_n,\Lambda=\lambda) \nonumber \\
\le &
-n ( F_+^{\mathcal{I}}(W, R_{\mathrm{p}}, R_{\mathrm{s}}, Q_{UV},\Xi ) -\delta)
\Label{Haya-03}
\end{align}
for $n \ge n_1$.

Since $x \mapsto \exp(x)$ is convex,
(\ref{Haya-21}), (\ref{Haya-22}), (\ref{Haya-30}) and (\ref{Haya-4}) yield that
\begin{align}
& \exp(\rho I_{\overline{W}_{\mathcal{Z}}^n}(F_n(B_n);Z^n,E_n|F_n=f_n,\Lambda=\lambda )) \nonumber \\
\le &
\sum_{\overline{W}_{n}\in {\cal W}_n(Q_V)}
\lambda (\overline{W}_{n}) 
 \exp(\rho I_{\overline{W}_n}(F_n(B_n);Z^n,E_n|F_n=f_n,\Lambda=\lambda )) 
\nonumber \\
\le &
\sum_{\overline{W}_{n}\in {\cal W}_n(Q_V)}
\lambda (\overline{W}_{n}) 
p(n)(1+  \varepsilon_{n,\rho}(\overline{W}_{n},P_{V^n,U^n}))
\nonumber \\
\le &
\sum_{\overline{W}_{n}\in {\cal W}_n(Q_V)}
p(n)(1+  \varepsilon_{n,\rho}(\overline{W}_{\mathcal{Z}}^n,P_{V^n,U^n})) \nonumber \\
\le &
p(n)|{\cal W}_n(Q_V)|(1+  \varepsilon_{n,\rho}(\overline{W}_{\mathcal{Z}}^n,P_{V^n,U^n}))\nonumber\\
\le &
\overline{p}(n)(1+  
\varepsilon_{n,\rho}(\overline{W}_{\mathcal{Z}}^n,Q_{V,U}^n))
\nonumber .
\end{align}

Taking the logarithm,
we have
\begin{align}
& 
I_{\overline{W}_{\mathcal{Z}}^n}(F_n(B_n);Z^n,E_n|F_n=f_n,\Lambda=\lambda )
\nonumber \\
%\Label{Haya-50} \\
\le &
\frac{
\log \overline{p}(n)^2(1+  
\varepsilon_{n,\rho}(\overline{W}_{\mathcal{Z}}^n,Q_{V,U}^n))
}{\rho}
\nonumber \\
\le &
\frac{\log (2 \overline{p}(n)^2)}{\rho}
+
n 
\frac{[\log \varepsilon_{1,\rho}(\overline{W}_{\mathcal{Z}},Q_{V,U})))]_+ }{\rho} 
%\sqrt{n}\log (2 \overline{p}(n)^2)
% + \sqrt{n}\varepsilon_{n,1/\sqrt{n}}(\overline{W}_{\mathcal{Z}}^n,Q_{V,U}^n))) 
\Label{Haya-24}.
\end{align}
Since 
$\lim_{\rho \to 0}
\frac{[\log \varepsilon_{1,\rho}(\overline{W}_{\mathcal{Z}},Q_{V,U})))]_+ }{\rho} 
=
F_-^{\mathcal{I}}(W, R_{\mathrm{p}}, R_{\mathrm{s}}, Q_{UV},\Xi )$,
we can choose an integer $n_2$ such that
\begin{align}
& 
I_{\overline{W}_{\mathcal{Z}}^n}(F_n(B_n);Z^n,E_n |F_n=f_n,\Lambda=\lambda)
% \textrm{Eq.} (\ref{Haya-50}) 
\nonumber \\
\le &
\textrm{RHS of} (\ref{Haya-24}) \hbox{ with }  \rho=1/\sqrt{n} \nonumber \\
\le &
n (F_-^{\mathcal{I}}(W, R_{\mathrm{p}}, R_{\mathrm{s}}, Q_{UV},\Xi )+\delta ) \Label{Haya-02}
\end{align}
for $n \ge n_2$

Therefore, using (\ref{Haya-03}), (\ref{Haya-02}),
we can see that $(F^{\mathrm{s}}$, $F^{\mathrm{c}}$, 
$F_+^{\mathcal{I}})$
$F_-^{\mathcal{I}})$
is a universally attainable quadruple of exponents
in the sense of Definition \ref{def:univexp}.
\qed

\begin{remark}
By suitably changing $R_{\mathrm{p}}$, $Q_{UV}$ and $\Xi$ in
Eqs.\ (\ref{eq:univs})--(\ref{eq:univi2}),
the coding scheme used in the proof can achieve
a rate triple $(R_{\mathrm{s}}$, $R_e$, $R_{\mathrm{c}})$
if there exists a Markov chain $U\rightarrow V \rightarrow X \rightarrow 
YZ$ and
\begin{eqnarray}
R_{\mathrm{s}} &\leq&I(V;Y|U), \label{eq:extra}\\
%R_{\mathrm{s}} + R_{\mathrm{c}} &\leq& I(V;Y|U)+\min[I(U;Y),I(U;Z)],\nonumber\\
R_{\mathrm{c}} &\leq& \min[I(U;Y),I(U;Z)],\nonumber\\
R_e & \leq & I(V;Y|U)-I(V;Z|U),\nonumber\\
R_e & \leq &R_{\mathrm{s}}.\nonumber
\end{eqnarray}
Observe that Eq.\ (\ref{eq:extra}) does not exist in 
Theorem \ref{th1}, and that our achievable region could be smaller.
The reason behind this difference is that 
we do not split the confidential message into the private message
$B_n$ and the common message $E_n$ encoded by the BCD encoder.
The coding scheme for BCC in \cite{csiszar78} uses this kind
of message splitting. 

However, when $R_e = R_{\mathrm{s}}$, our coding scheme can achieve the
rate pairs given in Corollary \ref{cor:bcc} 
by suitably changing $R_{\mathrm{p}}$, $Q_{UV}$ and $\Xi$ in
Eqs.\ (\ref{eq:univs})--(\ref{eq:univi2}),
because the message splitting is unnecessary when $R_e = R_{\mathrm{s}}$.
\end{remark}

\section{Conclusion}\Label{sec4}
In this paper, we presented universally attainable error and
information exponents 
universally attainable equivocation rates
for discrete broadcast channels with
confidential messages. The result is novel as far as the authors know.
However, there are still rooms for improving this research result.
K\"orner and Sgarro also clarified upper bounds on the error
exponents, but we could not obtain one, because it is difficult to
evaluate the smallest possible mutual information leaked to
Eve over all the possible coding schemes.
On the other hand, the non-universal information exponent
appeared in \cite{hayashi11} is better than one presented
here. This suggests that our universally attainable information
exponent could be improved.

\section*{Acknowledgment}
The first author would like to thank 
Dr.\ Jun Muramatsu and Prof.\ Tomohiro Ogawa for the helpful discussion on the universal
coding.
A part of this research was done during the first author's stay
at the Institute of Network Coding, the Chinese University
of Hong Kong, and he greatly appreciates the hospitality by Prof.\ Raymond
Yeung.
This research was partially supported by 
the MEXT Grant-in-Aid for Young Scientists (A) No.\ 20686026 and
(B) No.\ 22760267, and Grant-in-Aid for Scientific Research (A) No.\ 23246071.
The Center for Quantum Technologies is funded
by the Singapore Ministry of Education and the National Research
Foundation as part of the Research Centres of Excellence programme.

%\bibliographystyle{IEEEtranS}
%\bibliography{mrabbrev,mybib}

% Generated by IEEEtranS.bst, version: 1.12 (2007/01/11)

\end{document}